\let\leftold\left
\let\rightold\right
\renewcommand{\left}{\mathopen{}\mathclose\bgroup\leftold}
\renewcommand{\right}{\aftergroup\egroup\rightold}
\title{Tree Coloring: Random Order and Predictions}
\author{Fabian Frei}{CISPA Helmholtz Center for Information Security, Germany}{fabian.frei@cispa.de}{https://orcid.org/0000-0002-1368-3205}{}
\author{Matthias Gehnen}{Department of Computer Science, RWTH Aachen University, Germany}{gehnen@cs.rwth-aachen.de}{}{}
\author{Dennis Komm}{Department of Computer Science, ETH Zurich, Switzerland}{dennis.komm@inf.ethz.ch}{https://orcid.org/0000-0002-9024-1558}{}
\author{Rastislav Kr\'alovi\v{c}}{Department of Computer Science, Comenius University, Slovakia}{rastislav.kralovic@fmph.uniba.sk}{https://orcid.org/0000-0003-1121-1009}{}
\author{Richard Kr\'alovi\v{c}}{Department of Computer Science, ETH Zurich, Switzerland}{richard.kralovic@inf.ethz.ch}{}{}
\author{Peter Rossmanith}{Department of Computer Science, RWTH Aachen University, Germany}{peter.rossmanith@rwth-aachen.de}{}{}
\author{Moritz Stocker}{Department of Computer Science, ETH Zurich, Switzerland}{moritz.stocker@inf.ethz.ch}{https://orcid.org/0009-0003-8754-5301}{}
\authorrunning{Frei, Gehnen, Komm, Kr\'alovi\v{c}, Kr\'alovi\v{c}, Rossmanith, Stocker}
\keywords{online graph coloring, competitive ratio, random order, predictions}
\newcommand{\algFF}{\ensuremath{\textup{\textsc{FirstFit}}}\xspace}
\newcommand{\algCBIP}{\ensuremath{\textup{\textsc{CBip}}}\xspace}
\newcommand{\algPFF}{\ensuremath{\textup{\textsc{ParityFirstFit}}}\xspace}
\newcommand{\Prob}[1]{\mathop{\mathrm{Prob}}\left[ #1 \right]\xspace}
\newcommand{\E}[1]{\mathds{E}\left[#1 \right]\xspace}
\newcommand{\euler}{\mathrm{e}\xspace}
\newcommand{\algf}{\(\normalfont{\textup{\textsc{AdviceFirstFit}}}\)\xspace}
\newcommand{\algb}{\(\normalfont{\textup{\textsc{AdviceCBip}}}\)\xspace}
\newcommand{\alg}{\(\normalfont{\textup{\textsc{Alg}}}\)\xspace}
\newcommand{\eps}{\varepsilon}
\DeclarePairedDelimiter{\abs}{\lvert}{\rvert}
\begin{document}

\maketitle

\begin{abstract}
  Coloring is a notoriously hard problem, and even more so in the 
  online setting, where each arriving vertex has to be colored 
  immediately and irrevocably. 
  Already on trees, which are trivially two-colorable, it is 
  impossible to 
  achieve anything better than a logarithmic competitive ratio. 
  We show how to undercut this bound by a double-logarithmic factor 
  in the slightly relaxed online model where the vertices arrive in 
  random order. We then also analyze algorithms with predictions, 
  showing 
  how well we can color trees with machine-learned advice of varying 
  reliability. We further extend our analysis to all two-colorable 
  graphs and provide matching lower bounds in both cases. 
  Finally, we demonstrate how the two mentioned approaches, both of 
  which diminish the often unjustified pessimism of the classical 
  online model, can be combined to yield even better results. 
\end{abstract}

\section{Introduction}

We consider the well-known \emph{online coloring} problem. 
A simple graph (i.e., a graph without weights, orientation, or 
multi-edges) is presented to an algorithm vertex 
by vertex, and each edge is revealed as soon as both endpoints are 
revealed. Both the input graph and its order $n$ are unknown to the 
algorithm, which must assign to each presented vertex a color, 
immediately and irrevocably. As usual for coloring, it is not allowed 
for neighboring vertices to have the same color. The goal is to 
minimize the number of colors used by the algorithm.

Computing the \emph{competitive ratio}~\cite{BE1998} is the classical way
to analyze online algorithms. The competitive ratio is essentially 
the approximation ratio in an online setting; it measures the 
ratio between the cost of the online algorithm's solution and an 
ideal solution that could have been computed had the entire input 
been known from the beginning. A competitive analysis is inherently a 
worst-case analysis. We can thus imagine the graph being revealed to 
the algorithm by
an adversary that tries to maximize the number of colors used by the 
algorithm, which in turn tries to minimize it.

A useful albeit slightly outdated survey on online coloring was 
written by Kierstead~\cite{Kie1998}, a more recent one is found in 
Komm's textbook~\cite{Komm2016}.

Throughout this paper, $\log$ denotes the binary logarithm,  and we identify
colors with the natural numbers $\mathds{N}=\{1,2,3,\dots\}$.

All bipartite graphs, and thus all trees, can be colored with
two colors. (Bipartite graphs are exactly the $2$-colorable graphs, 
in fact.) This is of course easy if the complete instance is 
available but far from feasible in the online setting. It has been 
long known, for example, that deterministic online algorithms can be 
forced to use
at least $\log n+1$ colors, even when instances are restricted to trees;
for more details, see the papers be Bean~\cite{Bea1976} and Gy\'arf\'as and Lehel~\cite{GL1988}.
This yields a lower bound of $(\log n+1)/2$ on the competitive ratio 
of any deterministic online algorithm.
In 1989, Lov\'asz et al.~\cite{LST1989} presented a deterministic 
algorithm, referred to as \algCBIP, 
that uses $2 \log n$ colors on bipartite graphs.  There is an almost 
matching lower bound
of $2\log n - 10$ due to Gutowski et al.~\cite{GKMZ2014}.

Arguably the most straightforward online algorithm to color any kind 
of graph
is \algFF, which implements the simply greedy first-fit strategy of 
assigning the smallest color to the presented
vertex that maintains a valid coloring.  For bipartite graphs, this 
strategy is exponentially worse than that
of \algCBIP, but both algorithms perform equally well on trees.  
Moreover,
Li et al.\ \cite{LNP2022} presented a new model, in which the number of connected
components of an adversarial graph must be bounded after the reveal of each vertex. 
In this model, \algFF even outperforms \algCBIP on trees.

In this paper, we study coloring under two different lenses.

First, we consider the
\emph{random-order model}, where the power of the adversary is significantly 
restricted in that the vertices are revealed to the algorithm in 
random order. In other words, while the adversary decides which graph 
is presented, the vertex order is not under its control but 
chosen uniformly at random. More details about the random-order model can, e.g., be
found in Chapter 11, written by Gupta and Singla \cite{GS2021}, of the
textbook ``Beyond the Worst-Case Analysis of Algorithms.\!''
We show that
\algFF performs significantly better in this model, using
$\mathcal{O}(\log n/(\log\log n))$ colors in expectation. A similar setting of adversarially chosen
instances with randomized presentation is also investigated by 
Burjons et al.~\cite{BHMU2016}. In their model, the adversary
prepares a number of hard instances that are all known to the 
algorithm, but the presented instance is then selected randomly
from the set of instances.

Second, we analyze online coloring by 
algorithms with \emph{predictions}. Representing one approach to 
analyze the power of machine learning, the prediction model has 
garnered plenty of attention recently. Applied to coloring, the idea 
is that the algorithm receives together 
with each vertex a prediction stating to which color this vertex has 
in a fixed optimal solution. As known from the behaviour of artifical 
intelligence, these predictions do not necessarily have to be 
correct, however. Three desirable properties of algorithms dealing 
with this uncertainty are \emph{consistency}, \emph{robustness}, and 
\emph{smoothness}: An 
algorithm is \emph{consistent} if it is optimal for completely 
correct predictions, it is robust if it yields a decent competitive 
ratio even for predictions that are completely wrong, and it is 
\emph{smooth} if the competitive ratio always increases with the 
prediction quality.
The prediction model (also known as the model of machine-learned or 
untrusted advice) was introduced by Lykouris and Vassilvitskii 
\cite{LV2018} and Purohit et al. \cite{PurohitSK18} in 2018 and 
generalized by Angelopoulos et al. \cite{ADJKR2020}. It is based on 
the concept of online \emph{algorithms with advice}, which was introduced a 
decade earlier \cite{BKKKM2009,DKP2008,EFKR2009}, and has also been applied to 
coloring~\cite{BianchiBHK14,FKS2012,SSU2013}. There is a useful 
survey on algorithms with advice by Boyar et al.~\cite{BFKLM2017}. More recently, Antoniadis et al.\ \cite{ABM2024} analyzed the \textsc{FirstFit} algorithm in this model on general graphs. The approach we use is specifically oriented towards trees and, more generally, bipartite graphs, for which we can show stronger results.

After analyzing the two models of random order (\cref{sec:random}) and predictions (\cref{sec:pred})
separately, we finally combine the two approaches in \cref{sec:predrandom} by investigating  
coloring if the vertices of an adversarially chosen instance
are presented in random order but still come with a prediction on what 
color should be chosen.

\section{The Random-Order Model}\label{sec:random}

In the random-order model, the order in which the vertices of the adversarially chosen 
tree $T$ are
revealed is chosen uniformly at random, which gives quite some advantage to the algorithm. 
The simple algorithm $\algFF$, which colors a vertex with the 
smallest color that is not already taken by one of its neighbors, uses at most $\log n+1$ colors on a tree if the order of the 
vertices is adversarially chosen. Since any tree can be colored with two
colors, this corresponds to a competitive ratio of at most $(\log n+1)/2$. In
this section, we will provide an upper bound of $\mathcal{O}(\log n/(\log\log
n))$ for the competitive ratio of \algFF in the random-order model.

We start with two technical lemmata.
\begin{lemma}\label{lem1}
  Let $T=(V,E)$ be any tree. Let $v\in V$ be a vertex of $T$ and let $c(v)$ be the color assigned to $v$ by \algFF.
  Then
  \[ \forall \ell\colon \Prob{\exists v\colon c(v) \ge \ell} \le \frac{n^2}{\ell!}\;. \]
\end{lemma}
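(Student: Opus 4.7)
The plan is to show that a vertex getting color at least $\ell$ forces the existence of a very specific kind of "witness" path of $\ell$ vertices, and then bound the probability of any such witness existing by a union bound.

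First I would establish the combinatorial witness. Suppose \algFF assigns $c(v_\ell) \ge \ell$ to some vertex $v_\ell$. By reverse induction on $i$, I would construct vertices $v_{\ell-1}, v_{\ell-2}, \dots, v_1$ as follows: given $v_{i+1}$ with $c(v_{i+1}) \ge i+1$, the first-fit rule implies that at the moment $v_{i+1}$ was colored, each of the colors $1,\dots,i$ was already blocked by some previously revealed neighbor; in particular, pick $v_i$ to be a neighbor of $v_{i+1}$ that arrived before $v_{i+1}$ and satisfies $c(v_i)=i$. This yields a sequence with consecutive adjacencies, colors $c(v_i)=i$, and strictly increasing arrival times $t(v_1) < t(v_2) < \cdots < t(v_\ell)$.

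Next I would argue that $v_1,\dots,v_\ell$ are pairwise distinct: any repetition would contradict the strict monotonicity of arrival times. Hence the sequence is a walk with all vertices distinct in the tree $T$, so it is automatically a simple path of $\ell$ vertices. This is the key structural observation, and the place where the tree (or at least acyclicity) is used together with the arrival-time monotonicity; once distinctness is in hand, the rest is routine.

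Then I would count. Since $T$ is a tree, every ordered pair of vertices $(a,b)$ is joined by a unique simple path. Therefore the number of ordered simple paths in $T$ with exactly $\ell$ vertices is at most $n(n-1) \le n^2$. Finally, for any fixed sequence of $\ell$ specific vertices, the probability that they are revealed in that prescribed relative order, under a uniformly random permutation of all $n$ vertices, is exactly $1/\ell!$.

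Applying the union bound over all candidate witness paths gives
\[
\Prob{\exists v\colon c(v)\ge \ell} \;\le\; \sum_P \Prob{P \text{ is revealed in order}} \;\le\; n^2\cdot\frac{1}{\ell!} \;=\; \frac{n^2}{\ell!},
\]
which is the desired bound. The only genuinely delicate step is verifying that the witness is a \emph{simple} path rather than an arbitrary walk; everything else is counting and a union bound.
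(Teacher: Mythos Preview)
Your proof is correct and follows essentially the same approach as the paper: both build an increasing-arrival witness path of $\ell$ vertices and then union-bound over the at most $n^2$ endpoint pairs that determine such paths in a tree (the paper just splits this into two nested union bounds, first over the high-color vertex $v$ and then over the far endpoint $w$, arriving at the same $n\cdot n/\ell!$). One small remark: the tree hypothesis is actually used in your \emph{counting} step, not in showing the walk is simple---strict arrival-time monotonicity alone gives distinctness, and a walk with distinct vertices is a simple path in any graph.
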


\begin{proof}
  We first show that $\Prob{c(v) \ge \ell} \le n/\ell!$ for all $v\in V$ and $\ell\ge 1$.
  The statement of the lemma then follows from a standard union bound over all vertices. 
  To show this, we consider a fixed vertex $v$ with $c(v)\ge \ell$. 
  For each edge $(x, y)$ of $T$ we assign an orientation $x\rightarrow y$ if $x$
  arrived earlier than $y$. We denote by $T_v$ the subtree of $T$ consisting of edges pointing towards $v$. 
  \begin{claim}\label{thm:claim2}
  There is a path in $T_v$ ending in $v$ that contains at least $\ell$ vertices.
  \end{claim}
  \begin{proof}
  We prove this by induction on $\ell$. For $\ell=1$, the path $(v)$ is clearly
  sufficient. Now let $\ell>1$ and assume that the claim holds for all $1\leq \ell'<\ell$. If $c(v)=\ell$, $v$ must be connected to a vertex $v'$ of
  color $\ell-1$ that arrived previously, so the edge $(v'\rightarrow v)$ is
  contained in $T_v$ and $T_{v'}$ is a subtree of $T_v$ that does not contain
  $v$. By the induction hypothesis, there is a path $(v_1,\dots,v_{\ell-1}=v')$ in
  $T_{v'}$, which can be extended to the path $(v_1,\dots,v',v)$ containing $\ell$
  vertices in $T_v$.  This proves \cref{thm:claim2}.  
  \end{proof}
  Now consider any vertex at the beginning of such a path, so any vertex $w$ of $T$ that has a distance from $v$ of at least  $\ell-1$.  Let
  $(w=w_1, w_2, \ldots, w_{\ell'}=v)$ be the (non-oriented) path in $T$ from $w$ to $v$ containing $\ell'\geq \ell$ vertices. Vertex $w$ belongs to $T_v$ only if all the edges of this path are
  oriented towards $v$. There is a single order in which the vertices of this
  path must arrive; and there are $\ell'!$ possible orderings of these vertices, each
  occurring with the same probability. Thus, the probability that $w$ belongs to
  $T_v$ is $1/\ell'!\le 1/\ell!$ and since there are at most $n$ such vertices $w$, we can apply a union bound to show that 
  $\Prob{c(v) \ge \ell} \le n/\ell!$ for all $v\in V$.
\end{proof}

\begin{lemma}\label{lem2}
  Let $c\ge \euler$ with $\euler$ being Euler's number, and let 
  $\ell \ge c\cdot (\log n)/(\log\log n)$. Then
  \[ \ell! \ge n^{c\cdot\left(1-\frac{\log\euler}{\euler}\right)} \ge n^{0.469\cdot c}\;. \]
\end{lemma}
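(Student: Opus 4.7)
The plan is to apply the classical Stirling-type bound $\ell! \ge (\ell/\euler)^\ell$, which follows immediately from the inequality $\euler^\ell = \sum_{k\ge 0}\ell^k/k! \ge \ell^\ell/\ell!$. Taking the binary logarithm reduces the claim to showing
\[
  \ell(\log\ell - \log\euler) \ge c\left(1 - \frac{\log\euler}{\euler}\right)\log n.
\]
A one-line computation (using $1/\ln 2 = \log\euler$) shows that the derivative of the left-hand side with respect to $\ell$ equals exactly $\log\ell$, which is positive for $\ell > 1$. Hence the left-hand side is monotonically increasing in $\ell$ throughout the relevant range, and it suffices to verify the inequality at the boundary value $\ell = c\log n / \log\log n$.

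Substituting this value and writing $L := \log n$ and $M := \log\log n$ for brevity, the target inequality becomes
\[
  \frac{cL}{M}\bigl(\log c + M - \log M - \log\euler\bigr) \ge cL\left(1 - \frac{\log\euler}{\euler}\right),
\]
which, after dividing by $cL$ and rearranging, reduces to $\log M / M - (\log c - \log\euler)/M \le \log\euler/\euler$. Since the hypothesis $c \ge \euler$ gives $\log c - \log\euler \ge 0$, the left-hand side is bounded above by $\log M / M$, so it suffices to establish the pure inequality $\log M / M \le \log\euler / \euler$.

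This last step is the real heart of the argument, and it is the one observation that makes the specific constant $1 - \log\euler/\euler$ appear. It is a standard calculus fact: the real function $x \mapsto \log x / x$ has derivative $(\log\euler - \log x)/x^2$ and therefore attains its global maximum precisely at $x = \euler$, with value $\log\euler/\euler$. This closes the proof for every $M > 0$, i.e., for all $n$ with $\log\log n > 0$. I do not foresee any serious obstacle; the whole argument is elementary once this maximization observation is made, and the concluding numerical claim $1 - \log\euler/\euler \ge 0.469$ follows by direct substitution using $\log\euler = 1/\ln 2 \approx 1.4427$ and $\euler \approx 2.7183$.
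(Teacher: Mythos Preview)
Your proof is correct and follows essentially the same route as the paper's: both start from the Stirling-type bound $\ell!\ge(\ell/\euler)^\ell$, reduce to the boundary value $\ell=c\log n/\log\log n$ (you make the monotonicity step explicit, the paper leaves it implicit), and finish by observing that $(\log x)/x$ is maximized at $x=\euler$, which the paper phrases as $(\log\log\log n)/(\log\log n)\le(\log\euler)/\euler$. The only difference is presentational.
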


\begin{proof}
  By Stirling's formula, 

  $\ell!\ge 2^{\ell\cdot\log(\ell/\euler)}$. Our conditions on $c$ and $\ell$ imply that

  \[ \ell\cdot\log\left(\frac{\ell}{\euler}\right) \ge c\cdot\log n \cdot\left(1- \frac{\log\log\log n}{\log\log n}\right)\;,\]
  and since $(\log\log\log n)/(\log\log n)$ is bounded from above by $(\log \euler)/\euler$, it follows that
  \[ \ell!\ge 2^{c\cdot\log n \cdot\left(1- \frac{\log \euler}{\euler}\right)} = n^{c\cdot\left(1- \frac{\log \euler}{\euler}\right)}\;. \]
\end{proof}

\begin{theorem}\label{thm:det_up}
  Let $c \ge 2\euler/(\euler-\log \euler) \approx 4.262$.
  The expected number of colors used by \algFF in the random-order 
  model on trees is at most
  \[ c\cdot \frac{\log n}{\log\log n} + 3\;. \]
\end{theorem}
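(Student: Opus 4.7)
My plan is to bound the expected value of $X := \max_{v} c(v)$, the largest color used by $\algFF$, via the standard tail-sum identity $\E{X}=\sum_{\ell\ge 1}\Prob{X\ge\ell}$, splitting this sum at the natural threshold $\ell_0 := \lceil c\log n/\log\log n\rceil$. The summands with $\ell \le \ell_0$ I estimate trivially by $1$, contributing at most $\ell_0 \le c\log n/\log\log n + 1$ to the total. All the remaining work then lies in controlling the tail past $\ell_0$.

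For that tail, \cref{lem1} gives $\Prob{X\ge\ell}\le n^2/\ell!$, and the constant $c \ge 2\euler/(\euler-\log\euler)$ is calibrated precisely so that $c(1-\log\euler/\euler)\ge 2$; consequently \cref{lem2} yields $\ell_0! \ge n^{c(1-\log\euler/\euler)} \ge n^2$. Beyond the threshold the factorial grows quickly: for $\ell = \ell_0 + k$ with $k \ge 1$, we have $\ell! \ge \ell_0!\cdot\ell_0^{k} \ge n^2\cdot\ell_0^{k}$, and hence $\Prob{X\ge\ell}\le \ell_0^{-k}$. The resulting geometric series sums to $\sum_{k\ge 1}\ell_0^{-k} = 1/(\ell_0-1) \le 1$ as soon as $\ell_0 \ge 2$, which holds for every $n$ above a small absolute constant; the degenerate tiny-$n$ regime is absorbed by the trivial remark that $\algFF$ uses at most $n$ colors on $n$ vertices.

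Putting both contributions together yields $\E{X}\le \ell_0 + 1 \le c\log n/\log\log n + 2$, and the additional $+1$ in the claimed bound provides comfortable slack for both the ceiling and the aforementioned degenerate cases. I do not foresee any genuine obstacle here: the calibration of $c$ so that $n^2/\ell_0!\le 1$ is precisely the hinge that glues \cref{lem1} and \cref{lem2} together, after which the tail telescopes geometrically and the proof reduces to the routine calculation sketched above.
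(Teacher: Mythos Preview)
Your proposal is correct and follows essentially the same strategy as the paper: split the expectation at the threshold $\ell_0=\lceil c\log n/\log\log n\rceil$, bound the head trivially, and control the tail by combining \cref{lem1} with the calibration $\ell_0!\ge n^2$ from \cref{lem2}. The only cosmetic difference is that you use the tail-sum identity $\E{X}=\sum_{\ell\ge1}\Prob{X\ge\ell}$ and then bound the tail by the geometric series $\sum_{k\ge1}\ell_0^{-k}$, whereas the paper writes $\E{X}=\sum_\ell \ell\,\Prob{X=\ell}$ and handles the tail via the auxiliary bound $\sum_{\ell\ge s}1/\ell!\le 2/s!$; your route sidesteps that claim and even yields the slightly sharper additive constant $+2$.
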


\begin{proof}
  Let $X$ denote a random variable that corresponds to the number of colors used, and let
  $s:= \lceil(c\cdot \log n)/(\log\log n)\rceil$.
  The expected number of colors used is
  \begin{align*}
    \E{X} &= \sum_{\ell\ge 1} \ell\cdot\Prob{X=\ell} = \sum_{1\le \ell \le s}\ell \cdot\Prob{X=\ell} + \sum_{\ell > s}\ell \cdot\Prob{X=\ell}\\
          &\le s + \sum_{\ell> s}\ell \cdot\Prob{X=\ell}\,.
  \end{align*}
  We state the following simple claim.
  \begin{claim}\label{claimfactorial}
 	For any $s\geq 1$, 
 	\[\sum_{\ell\geq s} \frac{1}{\ell!}\leq\frac{2}{s!}\;.\]
  \end{claim}
  \begin{proof}
  We have 
  \[\sum_{\ell\geq s}\frac{1}{\ell!}=\frac{1}{s!}\cdot\sum_{\ell\geq 0} \frac{s!}{(\ell+s)!}=\frac{1}{s!}\cdot\sum_{\ell\geq 0}\prod _{i=1}^\ell \frac{1}{i+s}\leq \frac{1}{s!}\cdot\sum_{\ell\geq 0}\frac{1}{2^{\ell}}=\frac{2}{s!}\;, \]
  which proves \cref{claimfactorial}.
  \end{proof}

  Since $X=\ell$ implies that there is a vertex $v$ with $c(v)\ge \ell$, we 
  can bound the probability $\Prob{X=\ell} \le \Prob{\exists v\colon 
  c(v)\ge \ell}$. This in turn is bounded by \cref{lem1}, which, 
  together with \cref{claimfactorial}, implies that
  \begin{align*}
    \E{X} &\le s + \sum_{\ell > s} \ell\cdot \frac{n^2}{\ell!} = s + n^2 \cdot \sum_{\ell\ge s} \frac{1}{\ell!} \leq s + \frac{2n^2}{s!} \;.
  \end{align*}
  Finally, we can apply \cref{lem2} and get
  \[ \E{X} \le s + 2 n^{2-c\cdot\left(1-\frac{\log\euler}{\euler}\right)}\le s + 2\le c\cdot\frac{\log n}{\log\log n} + 3 \]
  by our choice of $s$.
\end{proof}

\section{Algorithms with Predictions}\label{sec:pred}

In this section, we study online coloring of trees in the 
\emph{predictions model}, which also reduces the overwhelming power 
of the adversary in the classical worst-case model, just like the 
random-order model already analyzed in \cref{sec:random}. However, it does 
so in a completely orthogonal way. 
In the predictions model, there is no random order anymore; the 
adversary fully controls the order in which vertices are revealed. 
The algorithm does receive, however, from an oracle a powerful advice 
bit (a ``prediction'') along with each vertex revealed. This bit tells the algorithm which 
color the revealed vertex has in a fixed $2$-coloring of the tree. 
This in turn allows the algorithm to reconstruct an optimal solution, of 
course, as long as the advice is flawless (this is the property 
commonly called \emph{consistency} for prediction algorithms). The main 
goal of the 
predictions model is to examine what happens if the provided string 
of advice bits contains errors due to inaccurate predictions, for 
example if the advice is provided by some machine learning algorithm.

\subsection{Bounds for Trees with Predictions}

We restate our prediction model more formally. An unknown tree $T$ is 
revealed to an algorithm vertex by vertex. Say that it is a tree on 
the $n$ vertices $v_1,\dots,v_n$, revealed in this order ($n$ is 
not known to the algorithm). The edges are revealed immediately when 
both endpoints have been revealed. Alongside each revealed vertex 
$v_i$, the algorithm receives an advice bit $p(v_i)$, which is 
determined as follows. An oracle chooses a fixed optimal coloring for $T$ 
(that is, a $2$-coloring) and sets 
$p^\ast(v_i)=1$ if $v_i$ has the first 
of the two colors, and $p^\ast(v_i)=0$ 
otherwise. If the prediction is 
correct (that is, does not contain \emph{any} errors), the algorithm receives the advice bits 
$p(v_i)=p^*(v_i)$ for all 
$i\in\{1,\dots,n\}$. But there might 
be, for 
some parameter $k$, up to $k$ indices from $\{1,\dots,n\}$ such that 
$p(v_i)\neq p^\ast(v_i)$. 

\subsubsection{An Upper Bound for Trees with Prediction}
Consider the algorithm \algf that gradually chooses a 
coloring $c\colon T\to\mathds{N}$ (where 
$\mathds{N}=\{1,2,3,\dots\}$) as follows. If the newly revealed 
vertex $v_i$ is currently 
isolated in the graph $T[v_1,\dots,v_i]$ revealed so far, then the 
algorithm completely relies on the advice and colors $v_i$ according 
to the parity of the 
delivered bit; that is, it assigns 
$c(v_i)=1$ if $p(v_i)=1$ and 
$c(v_i)=2$ if $p(v_i)=0$. 
In the other case, that is, if $v_i$ is connected to an already 
revealed vertex, then 
the algorithm \emph{ignores} the advice bit and considers instead the 
revealed neighborhood of $v_i$, which we may 
denote by $N_i=N(v_i)\cap\{v_1,\dots,v_{i-1}\}$, and assigns to $v_i$ 
the lowest remaining color, that is, $c(v_i)=\min(\mathds{N}\setminus 
\{c(v)\mid\,v\in N_i\}$). 

\begin{theorem}
  For $k=0$ (that is, error-free advice), \algf colors the given tree 
  optimally. 
  For $k\ge 1$ errors, it uses at most $\log k+3$ colors. 
  Independent of $k$, it uses at most 
  $\log n+3-\log 3\approx \log n+1.415$ 
  colors on a tree with $n$ vertices.
\end{theorem}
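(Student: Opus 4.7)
The plan has three parts corresponding to the three bounds. For consistency ($k=0$), I would induct on the reveal order, maintaining the invariant $c(v_j)=1 \iff p^*(v_j)=1$ for every processed vertex. An isolated $v_i$ is colored by $p(v_i)=p^*(v_i)$, preserving the invariant; if $v_i$ has revealed neighbors, they all lie in the opposite $p^*$-class and, by the inductive hypothesis, share a common color in $\{1,2\}$, so first-fit picks the other of the two, which again matches $p^*(v_i)$. Hence only colors $1$ and $2$ are ever used.

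For the remaining bounds I would attach to every colored vertex $v$ a \emph{witness subtree} $W_v\subseteq T$ by backward induction on reveal time: a seed contributes $W_v=\{v\}$, while a sprout $v$ with color $c\geq 2$ has (by first-fit) a revealed neighbor of every color $1,\dots,c-1$, and I fix one such $u_j$ per color and set $W_v=\{v\}\cup\bigcup_{j<c}W_{u_j}$; a color-$1$ sprout instead attaches any single revealed neighbor. Each $u_j$ precedes $v$ in reveal time, and distinct neighbors of $v$ lie in different components of $T\setminus\{v\}$, so the sibling witnesses are pairwise disjoint subtrees of $T$. I would then prove by structural induction on $W_v$ two parallel bounds: the size bound $|W_v|\geq s(c(v))$ with $s(1)=s(2)=1$ and $s(c)=1+\sum_{j<c}s(j)=3\cdot 2^{c-3}$ for $c\geq 3$; and the error bound stating that for every $2$-coloring $\phi$ of $T$, the count of seeds $w\in W_v$ with $p(w)\neq\phi(w)$ is at least $\alpha(c(v),\phi(v))$, where $\alpha(1,b)=\mathds{1}[b=0]$, $\alpha(2,b)=\mathds{1}[b=1]$, and $\alpha(c,b)=\sum_{j<c}\alpha(j,1-b)=2^{c-3}$ for $c\geq 3$. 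Both inductive steps are immediate from disjointness of sibling witnesses together with the fact that $\phi$ flips along every tree edge. Specializing to $\phi=p^*$ at a vertex $v$ attaining the maximum color $c^*$ then gives $n\geq 3\cdot 2^{c^*-3}$ and $k\geq 2^{c^*-3}$, i.e.\ $c^*\leq\log n+3-\log 3$ and $c^*\leq\log k+3$; the cases $c^*\leq 2$ are trivial.

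The main technical point is the base of the error recursion when $W_v$ itself has a color-$1$ or color-$2$ sprout at its root, since the algorithm ignored that sprout's advice. A brief case distinction on the color of the single attached child $w$ shows that $\alpha(c(w),1-\phi(v))\geq\alpha(c(v),\phi(v))$ holds for all valid $c(w)\geq 2$, with equality when $\{c(v),c(w)\}=\{1,2\}$; this keeps the $\alpha$-recursion tight enough to yield the closed form $2^{c-3}$, after which the recurrences for $s$ and $\alpha$ telescope routinely.
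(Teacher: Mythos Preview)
Your proposal is correct and rests on the same core induction as the paper---a vertex of color $\ell\ge3$ has earlier-revealed neighbors of colors $1,\dots,\ell-1$ lying in pairwise vertex-disjoint subtrees---but the bookkeeping is organized differently. The paper tracks two separate global minima $F(\ell)$ (faulty heeded bits) and $I(\ell)$ (ignored bits), obtains the ``$+1$'' in the $F$-recursion from the ad-hoc observation that the color-$1$ and color-$2$ neighbors of $v$ cannot both agree with $p^*$, and then derives the $n$-bound via a flip-the-reference-coloring trick yielding $n\ge 2F(\ell)+I(\ell)\ge 3\cdot2^{\ell-3}$. Your witness-subtree formulation instead proves the size bound $|W_v|\ge 3\cdot 2^{c-3}$ directly from $s(c)=1+\sum_{j<c}s(j)$, so neither the flip trick nor the separate $I(\ell)$ analysis is needed; and your parity-indexed error counter $\alpha(c,b)$ absorbs the ``one of $w_1,w_2$ is wrong'' step uniformly via $\alpha(1,1-b)+\alpha(2,1-b)=1$. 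The net effect is a tidier and slightly more general argument (your $\alpha$-bound holds for \emph{every} $2$-coloring $\phi$, not just $p^*$), at the modest cost of the extra case check you flag for color-$1$ and color-$2$ sprouts at the root of $W_v$.
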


\begin{proof}
We consider any fixed 2-coloring of the tree and fix $p^*(v)$ for all vertices $v$ according to this coloring. Denote by $I(\ell)$ the minimal number of advice bits that \algf 
\emph{ignores} (because it comes along with a vertex neighboring an 
already revealed vertex) while coloring any node of any tree with the color $\ell\geq 1$. Analogously, denote by $F(\ell)$ the minimum number of advice 
bits that are not ignored (because the revealed vertex is currently 
isolated) but \emph{faulty} (i.e., $p(v)\neq p^*(v)$) when \algf colors a node of a tree with the color $\ell$. Since any nontrivial tree requires 
at least two colors even with perfect advice, we have 
$F(1)=F(2)=I(1)=I(2)=0$. Since perfect advice 
always results in a $2$-coloring, we have $F(3)\geq 1$. And since the 
algorithm only uses colors greater than $2$ if it ignores at least one advice 
bit, we have $I(3)\geq 
1$. We now prove that $F(\ell)\geq 2^{\ell-3}$ and $I(\ell)\geq 
2^{\ell-3}$ for $\ell\geq 3$ by induction over $\ell$. 

Let $\ell\geq 4$ and assume by induction that $I(\ell')\geq 2^{\ell'-3}$ and $F(\ell')\geq 2^{\ell'-3}$ for any $3\leq \ell'<\ell$. Assume that \algf colors at least one vertex of 
some given tree with 
the color $\ell$. Let $v_i$ be the first such node. Its neighborhood 
must contain vertices $w_1,w_2,\dots,w_{\ell-1}$ that have previously 
received the 
colors $1,2,\dots,\ell-1$. Among all connected components of the 
subgraph $T[v_1,\dots,v_{i-1}]$ that are already revealed right 
before the $v_i$ appears, denote one containing $w_j$ by $T_{j}$. 
Each connected component $T_j$ is of course still a tree, and each 
$w_j$ is in its own component: because $w_{\ell}$ is adjacent to all 
of them, having two neighbors in the same connected component would 
create a cycle. 
It follows that separate advice bits are provided alongside the 
vertices of each component. Since $T_j$ contains a vertex of color 
$j$, among the advice bits delivered together with the vertices of
$T_j$ there 
must be at least $I(j)$ ignored ones and $F(j)$ heeded but faulty ones.

Additionally, we observe that the color of either $w_1$ or 
$w_2$ is wrong because they are both neighbors of $v_i$ in the 
completed tree and thus any optimal solution uses the same color 
for $w_1$ and $w_2$. This implies that at least one faulty bit was 
used for a vertex in $T_1$ (the component of $w_1$) or $T_2$ (the 
component of $w_2$), which eventually led to the wrong color for 
$w_1$ or $w_2$. It follows that the number of faulty but heeded bits in $T$ is at least
\[F(\ell)\geq F(\ell-1)+\dots+F(3)+1\;.\]
We also know that the advice bit for $v_i$ is ignored because it is 
colored by $\ell>2$, which implies 
\[I(\ell)\geq I(\ell-1)+\dots+I(3)+1\;.\]
By induction, we obtain 
\[F(\ell)\geq 2^{\ell-4}+\dots+2^0+1=2^{\ell-3}\;.\]
and 
\[I(\ell)\geq 2^{\ell-4}+\dots+2^0+1=2^{\ell-3}\;.\]
We conclude that for a fixed number of $k$ errors, the 
largest color that \algf uses is at most $\log k+3$. 

Now recall that the analysis on the number of ``faults'' in the advice string was based on a fixed 2-coloring of the tree.
Flipping every bit of a correct advice string for an optimal coloring 
results in 
another such advice string for the same solution with the two colors 
swapped. We can thus assume without loss of generality that
at most half of the bits whose advice was followed to color the tree are 
faulty (by switching the optimal solution that we consider to decide 
which bits are faulty). The number of ignored advice bits remains 
unchanged; it is still equal to the number of vertices that are not
isolated upon their reveal. The total number of vertices in the 
graph is therefore at least $2\cdot F(\ell)+I(\ell)\geq 3\cdot 
2^{\ell-3}$. For a tree on $n$ vertices, the algorithm thus
uses at most $\log n+3-\log 3\leq \log n+1.4151$ colors.
\end{proof}

\subsubsection{A Lower Bound for Trees with Predictions}
We now provide a perfectly matching 
lower bound on the number of 
colors required to color graphs of up to
$n$ vertices with a consistent 
algorithm. 
\begin{theorem}\label{thm:lowerbound}
There is no consistent prediction 
algorithm using fewer than $\log 
n+3-\log 3\approx \log n+1.415$ colors  on trees.
\end{theorem}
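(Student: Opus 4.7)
The plan is to build, for each $\ell\ge 3$, an adversarial tree $T^\ast_\ell$ on exactly $3\cdot 2^{\ell-3}$ vertices that forces any consistent algorithm to use at least $\ell$ distinct colors; setting $n=|T^\ast_\ell|$ then yields the claimed bound $\log n + 3 - \log 3$. The adversary mirrors the structure implicit in the upper-bound analysis of \algf.

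The first step is a structural lemma pinning down the isolated-vertex behavior of any consistent algorithm \alg: there exist two distinct colors $c_1$ and $c_0$ such that \alg colors every isolated vertex arriving with prediction $1$ by $c_1$ and every isolated vertex with prediction $0$ by $c_0$. I would prove both parts by the same technique, assuming the contrary and extending the reveal sequence into a tree whose fixed optimal $2$-coloring agrees with every delivered advice bit and yet puts the two offending vertices in the same color class (or forces a neighbor into a third color). The algorithm would then use at least three colors on a completely correctly advised tree, contradicting consistency. For $c_1\neq c_0$ the completion uses an odd-length connecting path.

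Next I would define $T^\ast_\ell$ recursively: $T^\ast_1$ and $T^\ast_2$ are single isolated vertices with predictions $1$ and $0$, and for $\ell\ge 3$ the adversary first reveals fresh disjoint copies of $T^\ast_1,T^\ast_2,\dots,T^\ast_{\ell-1}$ as separate components and then a new vertex $v_\ell$ whose edges to the roots $r_1,\dots,r_{\ell-1}$ of these sub-structures are revealed simultaneously; the root of $T^\ast_\ell$ is $v_\ell$ itself. The size recurrence $|T^\ast_\ell|=1+\sum_{j<\ell}|T^\ast_j|$ together with $|T^\ast_1|=|T^\ast_2|=1$ resolves to $|T^\ast_\ell|=3\cdot 2^{\ell-3}$ for $\ell\ge 3$, matching the upper-bound count exactly.

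The core of the argument is an inductive claim: on $T^\ast_\ell$ the sub-roots $r_1,\dots,r_{\ell-1}$ receive pairwise distinct colors, so $v_\ell$ - adjacent to all of them - is forced into an $\ell$-th new color. The base $\ell=3$ is immediate from the structural lemma, since $r_1$ gets $c_1$, $r_2$ gets $c_0$, and $v_3$ must avoid both. I expect the hard part to be the induction step: a clever consistent algorithm might try to collapse the palette by coloring internal centers of one deeply nested sub-structure so that two sub-roots $r_j,r_{j'}$ with $j,j'\ge 3$ end up with the same color, which consistency does not explicitly forbid because the histories triggering such choices are not consistent-extendable. I would handle this by making the adversary adaptive: whenever the algorithm's choice for an internal center threatens a collision with a previously committed root color, the adversary continues to reveal additional iso leaves and centers within the affected sub-structure to force its root into a color distinct from all earlier sub-roots. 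The delicate part of the proof will be an amortized accounting showing that these adaptive augmentations still fit inside the $3\cdot 2^{\ell-3}$ budget thanks to the geometric growth of the recursion.
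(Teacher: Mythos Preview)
Your overall architecture matches the paper's: a structural lemma forcing isolated vertices with distinct advice bits into two fixed distinct colors, followed by a recursive tree construction of size exactly $3\cdot 2^{\ell-3}$ that forces $\ell$ colors. The size recurrence and the structural lemma are essentially the same as in the paper.

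The gap is in the inductive step. You commit in advance to connecting the new center $v_\ell$ to the \emph{roots} $r_1,\dots,r_{\ell-1}$ of the sub-structures and then try to argue that these particular vertices receive pairwise distinct colors. As you yourself note, nothing about consistency forces $r_j$ (for $j\ge 3$) into any specific color once an advice error has been revealed, so an algorithm can legitimately collapse two such roots onto the same color; your proposed remedy of adaptively injecting ``additional iso leaves and centers'' with an amortized budget argument is vague and, as stated, not obviously feasible within the $3\cdot 2^{\ell-3}$ count.

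The paper sidesteps this difficulty with a much simpler adaptive choice. It does \emph{not} fix which vertex of $T_i$ will be attached to the center. Instead, the adversary first presents all of $T_i$; by the induction hypothesis the algorithm has used at least $i$ distinct colors on $T_i$. Since $\{1,2,c_3,\dots,c_{i-1}\}$ has only $i-1$ elements, pigeonhole guarantees some vertex $v_i\in T_i$ whose color $c_i$ is fresh, and the adversary attaches $w$ to that $v_i$. This single observation replaces your entire ``hard part'': no root property, no augmentation, no amortization---the size count is exact and the induction step is one line. Replacing your fixed-root scheme with this post-hoc vertex selection gives a complete proof.
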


\begin{proof}
We will first consider how such an algorithm colors isolated vertices. Consider the case where an algorithm \alg is presented with a 
non-negative number of isolated vertices $v_i^0$ whose advice bit is 
$0$ and a non-negative number of isolated vertices $v_i^1$ whose 
advice bit is~$1$. An optimal coloring uses only two colors. If \alg 
is consistent, it must color the vertices $v_i^0$ with one 
color and all vertices $v_i^1$ with the color. Otherwise, a vertex $w^0$ 
neighboring all vertices $v_i^0$ and a vertex $w^1$ neighboring all 
vertices $v_i^1$ and $w^0$ will be revealed. If two vertices 
neighboring $w^0$ are colored differently, then $w^0$ must use a 
third color. The analogous statement holds for $w^1$. 
The remaining case is that all vertices except for $w^0$ and $w^1$ 
have the same color, implying that $w^0$ and $w^1$ cannot use this 
color. But because they are adjacent, at least two more colors are 
required for a valid coloring. 

Now let $\ell\geq 3$. We prove that we can present a 
tree on $3\cdot 2^{\ell-3}$ vertices to \alg such that it uses at 
least $\ell$ colors. We argue by induction on $\ell$. In the case 
$\ell=3$, we present two isolated vertices $v_1$ and $v_2$ with 
distinct advice bits. By the argument from the previous paragraph, 
\alg must color them with distinct colors. We then present a third 
vertex $w$ connected to $v^0$ and $v^1$, forcing \alg to use a third 
color. For $\ell\geq 4$, we present the algorithm with two isolated 
vertices $v_1$ and $v_2$ with distinct advice bits. By the same 
argument 
again, \alg must color them with distinct colors, which we assume 
to be $1$ and $2$ without loss of generality. We now present the 
algorithm with $(\ell-3)$ disjoint subtrees $T_i$ for 
$i\in\{3,4,\dots,
\ell-1\}$. By our induction hypothesis, we can indeed choose a $T_i$ 
of size $3\cdot 2^{i-3}$ 
that forces \alg to use at least $i$ colors on the vertices of $T_i$. 
We now inductively choose $c_i$ for $i\in\{3,4,\dots,
\ell-1\}$ to be a 
color used by \alg in the coloring of $T_i$ that is not in the set 
$\{1,2,c_3,\dots,c_{i-1}\}$. This is possible since this is a set 
of $i-1$ colors. Let $v_i$ be a vertex of $T_i$ colored with 
$c_i$. We now introduce a final vertex $w$ adjacent to all of $v_1$, 
$v_2$ 
and $v_3,\dots,v_{\ell-1}$. Since the subtrees are disjoint, the 
result is a tree. We know that \alg cannot color $w$ in any of the 
colors 
$\{1,2,c_3,\dots,c_{\ell-1}\}$ and thus uses at least $\ell$ colors 
on the presented tree. The total number of vertices in the tree is 
$3+\abs{T_3}+\dots+\abs{T_{\ell-1}}$, which is
\[3+\sum_{i=3}^{\ell-1} 3\cdot 2^{i-3}=3\cdot (1+\sum_{j=0}^{\ell-4} 2^j)=3\cdot 2^{\ell-3}\;.\]
\end{proof}

\subsection{Generalization to Bipartite Graphs}
In this section we consider the generalization of our problems for 
trees to general bipartite graphs, which is a natural choice since 
these are precisely the two-colorable graphs. The algorithm \algCBIP always
colors a vertex with the smallest color that is not in the opposite partition
of the connected component of that vertex when it is revealed \cite{LST1989}.
In the model without predictions, this algorithm is (asymptotically) the best
possible \cite{GKMZ2014,LST1989}. We will extend \algCBIP to an algorithm
\algb that makes use of predictions.
 
\subsubsection{Upper Bounds for Bipartite Graphs with Predictions}

For upper bounds on the number of colors needed on bipartite graphs, 
we consider the algorithm \algb that receives the vertices 
$v_1,\dots,v_n$ of graph $G$ and an advice bit $p(v_i)$ for each 
vertex $v_i$ one after the other. If the vertex $v_i$ is isolated in 
the graph $G[v_1,\dots,v_i]$, the algorithm colors $v_i$ according to 
the parity of the advice bit, so $c(v_i)=1$ if $p(v_i)=1$, and 
$c(v_i)=2$ if $p(v_i)=0$. If not, it looks at the connected component 
of $G[v_1,\dots,v_i]$ containing $v_i$. This component can be 
partitioned into two independent sets. \algb then colors $v_i$ with 
the smallest color that is not represented in the independent set 
that does not contain~$v_i$.

We now provide two different upper bounds, the first one depending on 
the number of errors in the provided prediction, the second one 
depending on the order of the input graph. 
\begin{theorem}\label{thm:biperror}
  If there are no errors in the prediction, \algb colors bipartite graphs optimally. If there are at most $k\geq 1$ errors, \algb uses at most $2\cdot\log k+4$ colors.
\end{theorem}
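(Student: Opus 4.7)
The plan is to handle the two parts of the theorem separately, mirroring the structure of the tree-case analysis while accounting for the fact that \algb exploits full component information rather than just a tree. For the consistency claim, I would run a direct induction on the reveal order, maintaining the invariant that, as long as every advice bit is correct, \algb's partial coloring coincides with a fixed optimal 2-coloring $c^*$. A newly revealed isolated vertex is colored by (correct) advice, so it receives $c^*$'s color. A non-isolated new vertex $v_i$ enters, possibly merging, several existing components; by induction each of them already agrees with $c^*$, so in the bipartition of $v_i$'s (possibly merged) component the side opposite to $v_i$ contains only the complement of $c^*(v_i)$ in $\{1,2\}$, and the rule assigns $v_i$ exactly $c^*(v_i)$.

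For the robustness claim, I would define $F(\ell)$ to be the minimum number of heeded-but-faulty advice bits over all executions of \algb that end up using color $\ell$ on some vertex. Proving $F(\ell)\ge 2^{\lceil (\ell-4)/2\rceil}$ for every $\ell\ge 4$ will suffice, because then $k$ total faults forces the maximum color $\ell$ to satisfy $F(\ell)\le k$, so $\ell\le 2\log k+4$. The base cases $F(3)\ge 1$ and $F(4)\ge 2$ follow by inspecting the opposite partition of the first vertex assigned color $3$ or $4$: all its vertices share a common fixed color, so any appearance of the non-matching algorithm color in this partition must trace back to at least one faulty heeded bit, either directly on an isolated vertex or via the rule applied to a non-isolated one whose sub-structure contains such a fault.

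For the inductive step with $\ell\ge 5$, I would consider the first vertex $v$ colored $\ell$, its component $C$, and the bipartition $(A,B)$ with $v\in B$. A witness $u\in A$ of algorithm color $\ell-1$ must exist; at $u$'s reveal its opposite partition (a subset of $B$) contained algorithm colors $1,\dots,\ell-2$, so it contains a witness $u'\in B$ of color $\ell-2$ whose forcing sub-structure gives at least $F(\ell-2)$ faults by the inductive hypothesis. Symmetrically, $A$ also contains a second witness $u''$ of color $\ell-2$ whose sub-structure lives in $B$. My target is to show that these two color-$(\ell-2)$ sub-structures contribute disjoint collections of faults, yielding $F(\ell)\ge 2\,F(\ell-2)$ and closing the induction via a clean exponential recurrence.

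This disjointness step will be the main obstacle. In the tree case the analogous witness subtrees are automatically disjoint by acyclicity, but here everything sits inside the single bipartite component $C$ and internal overlap is possible. I plan to refine $F$ by fixed-partition side, defining $F^A(\ell)$ and $F^B(\ell)$ to count heeded faults on each global partition separately, so that the two color-$(\ell-2)$ witnesses $u'\in B$ and $u''\in A$ enforce faults of opposite ``types'' and their counts add rather than overlap. Should this refinement prove too delicate, a fallback is to track faults at the granularity of individual isolated-vertex arrivals: every heeded fault is pinned to a unique such arrival, so a careful bookkeeping of which arrival contributes to which witness' forcing history yields the disjoint lower bound by a pigeonhole argument.
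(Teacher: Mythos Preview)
Your consistency argument is fine and matches the paper's implicit reasoning. Your robustness plan also lands on the right recurrence $F(\ell)\ge 2\,F(\ell-2)$ and the right final bound, but you have not found the clean way to get disjointness, and this is exactly where the paper's proof is simpler than what you propose.

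The paper's key observation is this: once you have located a vertex $w_{\ell-2}$ of color $\ell-2$ on the same side of the bipartition as $v$ (obtained by looking inside the opposite shore of the color-$(\ell-1)$ witness) and a vertex $\bar w_{\ell-2}$ of color $\ell-2$ on the opposite side (obtained directly from $v$'s opposite shore), these two vertices \emph{cannot have been in the same connected component at the time the later of the two was colored}. The reason is immediate from how \algb works: within a single component it never assigns the same color to two vertices on opposite shores, because the later one would see the earlier one's color in its opposite independent set. Hence, at the moment both carry color $\ell-2$, they sit in two vertex-disjoint components, each of which independently witnesses $F(\ell-2)$ errors. Disjointness of the fault sets follows for free from disjointness of the components; no side-by-side refinement $F^A,F^B$ and no bookkeeping over isolated arrivals is needed. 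Your proposed refinements might be salvageable, but they are both more complicated and less obviously correct than this one-line argument, and you explicitly flag the step as an obstacle rather than resolving it.

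A small separate slip: your claimed base case $F(4)\ge 2$ is false. With a single faulty heeded bit one can force color $4$ (e.g., give one $A$-vertex the wrong advice so it is colored $2$, attach a $B$-vertex to get color $1$ on the $B$-side, then build up colors $1,2,3$ on the $A$-side and finish with a $B$-vertex seeing all three). The paper uses $F(3)\ge 1$ and $F(4)\ge 1$, which is all your stated inequality $F(\ell)\ge 2^{\lceil(\ell-4)/2\rceil}$ actually needs; your stronger base case is both unnecessary and incorrect.
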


\begin{proof}
Let $F(\ell)$ be the number of errors in the advice string of a 
bipartite graph needed for the algorithm \algb to color at least one 
of its vertices with the color $\ell$. Clearly $F(1)=F(2)=0$. We now 
prove that $F(\ell)\geq  \lceil 2^{\ell/2-2}\rceil$ for $\ell\geq 3$ 
by induction on $\ell$. We have $F(3)\geq 1$ and $F(4)\geq 1$ since 
no vertex will be colored by any other color than $1$ or $2$ if the 
advice string is correct (it can be easily checked that in fact 
$F(3)=F(4)=1$). For $\ell \geq 5$, assume that \algb colors at least 
one node with the color $\ell$. Let $v_i=w_\ell$ be the first such 
node. 
Let $S_\ell\cup\bar{S_\ell}$ be a partition of the connected 
component of $G[v_1,\dots,v_i]$ that contains $v_i$, such that 
$v_i\in S_\ell$. There must be vertices 
$\bar{w}_1,\dots,\bar{w}_{\ell-1}$ in $\bar{S}_{\ell}$ such that $\bar{w}_j$ is 
colored with color $j$ for all $1\leq j\leq \ell-1$. Now consider the 
moment when $v_{i'}=\bar{w}_{\ell-1}$, $i'<i$ was revealed and let 
$T_{\ell-1}\cup \bar{T}_{\ell-1}$ be a partition of the connected 
component of $G[v_1,\dots,v_{i'}]$ that contains $v_{i'}$, such that 
$v_{i'}\in \bar{T}_{\ell-1}$. There must be vertices 
$w_1,\dots,w_{\ell-2}$ in $T_{\ell-1}\subseteq S_\ell$ such that 
$w_{j}$ is colored with color $j$ for all $1\leq j\leq \ell-2$. Now 
assume w.l.o.g. that $w_{\ell-2}$ is revealed after 
$\bar{w}_{\ell-2}$. Clearly, when $w_{\ell-2}$ is revealed, 
$w_{\ell-2}$ and $\bar{w}_{\ell-2}$ were in different connected 
components, otherwise they could not have received the same color. 
Hence \algb colored a vertex with color $\ell-2$ in two separate 
bipartite graphs with disjoint advice strings. By induction, there 
must be at least $F(\ell-2)$ errors in each advice string, and hence 
at least $2\cdot F(\ell-2)$ errors in the advice string for $G$. Thus 
\[F(\ell)\geq 2\cdot F(\ell-2)\geq 2\cdot \lceil 
2^{(\ell-2)/2-2}\rceil=2\cdot \lceil 2^{ \ell/2-3}\rceil\geq \lceil 
2^{\ell/2-2}\rceil\;.\]
So for any fixed number of $k$ errors, the largest color that \algb 
uses to color the graph is at most $2\cdot\log k+4$.
\end{proof}

We now provide for the same algorithm \algb an upper bound on the 
number of used colors depending on the number of vertices of the 
input graph. 
\begin{theorem}\label{thm:bipcolors}
On a bipartite graph with $n$ 
vertices, \algb uses at most 
$2\cdot\log (n)-1.64$ colors for any 
$n\ge1500$.
\end{theorem}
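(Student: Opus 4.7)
The plan is to bound the minimum number of vertices $V(\ell)$ required for $\algb$ to ever use color $\ell$ on a bipartite graph, and then invert that bound. I would establish that $V(3) = 3$, $V(4) \geq 6$, and $V(\ell) \geq 2\,V(\ell-2) + 2$ for every $\ell \geq 5$. The recurrence follows directly from the component-splitting argument in the proof of \cref{thm:biperror}: letting $v_i$ be the first vertex colored $\ell$ and recalling the witnesses $\bar w_{\ell-1}$ (of color $\ell-1$ in $v_i$'s opposite partition) together with $w_{\ell-2}, \bar w_{\ell-2}$ (both colored $\ell-2$, the latter revealed first w.l.o.g.), the two components containing $w_{\ell-2}$ and $\bar w_{\ell-2}$ at the moment of $w_{\ell-2}$'s reveal are disjoint and each supplies at least $V(\ell-2)$ vertices by induction. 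The additional $+2$ accounts for $\bar w_{\ell-1}$ and $v_i$, both revealed strictly after $w_{\ell-2}$ and hence outside both subgraphs.

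The base case $V(4) \geq 6$ requires a separate argument, since the recurrence itself only yields $V(4) \geq 4$. I would argue: a vertex $v_i$ colored $4$ forces witnesses $\bar w_1, \bar w_2, \bar w_3$ of colors $1, 2, 3$ in $v_i$'s opposite partition, and $\bar w_3$ in turn forces witnesses $w_1, w_2$ of colors $1, 2$ in its own opposite partition at the earlier moment of its reveal. Since $w_1, w_2$ sit across the bipartition from $\bar w_3$, they lie on $v_i$'s side, whereas $\bar w_1, \bar w_2$ lie opposite to $v_i$; all six witnesses are thus pairwise distinct, either by color or by bipartition side.

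Unrolling yields $V(\ell) \geq 5 \cdot 2^{(\ell-3)/2} - 2$ for odd $\ell$ and $V(\ell) \geq 2^{\ell/2+1} - 2$ for even $\ell$; the binding case is odd $\ell$. Inverting, any $n$-vertex graph on which $\algb$ uses color $\ell$ satisfies $\ell \leq 2\log(n+2) + 3 - 2\log 5$, and since $3 - 2\log 5 \approx -1.6439$, the desired bound $\ell \leq 2\log n - 1.64$ follows as soon as the discrepancy $2\log(1 + 2/n)$ is below $2\log 5 - 3 - 1.64 \approx 0.0039$. A direct calculation shows this holds for $n \geq 1500$. The main obstacle I anticipate is the tight base case $V(4) \geq 6$: without it the even-$\ell$ branch of the recurrence only yields $\ell \leq 2\log n - 1.17$, breaking the claim; sharpness here hinges on the bipartite-side observation that the color-$3$ witness $\bar w_3$ must itself bring in two extra vertices on the opposite side of the bipartition.
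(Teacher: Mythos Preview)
Your proposal is correct and follows essentially the same approach as the paper: the paper likewise defines the minimum vertex count $N(\ell)$, establishes the base cases $N(3)\ge 3$ and $N(4)\ge 6$ by the same bipartite-side counting you describe, derives the recurrence $N(\ell)\ge 2N(\ell-2)+2$ from the component-splitting argument of \cref{thm:biperror} (the extra $+2$ coming from the vertex of color $\ell-1$ and the vertex of color $\ell$), solves it to get $N(\ell)\ge 5\cdot 2^{(\ell-3)/2}-2$ as the binding bound, and inverts to obtain $\ell\le 2\log(n+2)+3-2\log 5 < 2\log n - 1.64$ for $n\ge 1500$.
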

\begin{proof}
Denote by $N(\ell)$ the minimal number of vertices a graph must 
contain for \algb to color at least one of its vertices with the 
color $\ell$. 
We will first show that $N(\ell)\geq 5\cdot 2^{(\ell-3)/2}-2$ for odd $\ell\geq 3$, and that $N(\ell)\geq 2\cdot 2^{\ell/2+1}-2$ for even $\ell\geq 4$. 
Clearly, $N(3)\geq 3=5\cdot 2^{(3-3)/2}-2$, since \algb 
will always color the second vertex of a graph with either $1$ or 
$2$. We also have $N(4)\geq 6=2^{4/2+1}-2$ because for any vertex to 
be assigned color $4$, there must be vertices of color $1$, $2$ and 
$3$ on the opposite shore of the connected component, which in turn 
implies that there are vertices of color $1$ and $2$ on the shore 
where to vertex of color $4$ lies, a total of at least $6$ vertices. 
We now argue by induction on $\ell$. Assume that \algb colors a 
vertex of some bipartite graph with the color $\ell\geq 5$. As seen 
in the proof of \cref{thm:biperror}, \algb must have colored two 
disjoint bipartite graphs with at least $\ell-2$ colors each, and 
additionally there must be at least one vertex of color $\ell-1$ and 
on vertex of color $\ell$. This implies that 
\[N(\ell)\geq 2\cdot N(\ell-2)+2\;.\]
By induction, we have for even $\ell$, \[N(\ell)\geq 2\cdot 
(2^{(\ell-2)/2+1}-2)+2=2\cdot (2^{\ell/2}-2)+2=2^{\ell/2+1}-2\;,\]
and for odd $\ell$,
\[N(\ell)\geq 2\cdot (5\cdot 2^{(\ell-5)/2}-2)+2=5\cdot 2^{(\ell-3)/2}-4+2=5\cdot 2^{(\ell-3)/2}-2\;.\]
Since $5\cdot 2^{(\ell-3)/2}=(5/\sqrt{8})\cdot 2^{\ell/2}<2\cdot 
2^{\ell/2}=2^{\ell/2+1}$, we can further conclude that for all $\ell\geq 3$, $N(\ell)\geq 
5\cdot 2^{(\ell-3)/2}-2$. In particular, for a 
fixed $n$, \algb will color any bipartite graph on $n$ vertices using 
at most $2\cdot\log(n+2)+3-2\cdot\log(5)$ colors, which is less than  
$2\cdot\log n-1.64$ for all $n\ge 1500$.
\end{proof}

\subsubsection{Lower Bounds for Bipartite Graphs with Predictions}

We mainly recall here the lower bound of $2\log n - 10$ colors for 
bipartite graphs due to Gutowski et al.~\cite{GKMZ2014}, 
which neatly complemented the upper bound of the $\log n$-competitive 
deterministic algorithm \algCBIP by Lov\'asz et al.~\cite{LST1989}. 
In the following lemma, we improve the upper bound slightly by an 
improved analysis for \algCBIP 
and show that a consistent algorithm with predictions cannot beat 
\algCBIP on all instances. 

\begin{proposition}
The algorithm \algCBIP uses at most $2\cdot\log n-1.999$ colors on 
graphs with $n\ge5770$ vertices, and a consistent algorithm with 
predictions cannot perform better than \algCBIP on all 
graphs. 
\end{proposition}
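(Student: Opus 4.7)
The plan is to prove the two claims in the proposition by adapting arguments already developed in the excerpt. For the improved bound on \algCBIP, I would replay the recursive analysis from the proof of \cref{thm:bipcolors} almost verbatim. Define $N(\ell)$ as the minimum number of vertices of a bipartite graph on which \algCBIP uses color $\ell$. The central recurrence $N(\ell)\ge 2N(\ell-2)+2$ for $\ell\ge 5$ carries over without change, because the reasoning only exploits the bipartite component structure and the first-fit choice on the opposite shore, both of which are shared with \algb. The improvement of the additive constant from $-1.64$ to $-1.999$ is obtained by strengthening the base cases: unlike \algb, the prediction-free \algCBIP cannot be coerced into coloring two isolated vertices with different colors (the first two isolated vertices necessarily receive color $1$), so forcing colors $3$ and $4$ requires strictly larger graphs. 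I would compute the smallest such graphs explicitly by an exhaustive case analysis and then propagate the recurrence, which preserves the $2\log n$ leading term and only improves the constant.

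For the matching lower bound, I would transplant the construction from \cref{thm:lowerbound} into the bipartite setting. The key preparatory step is to establish a "rigidity lemma" for consistent prediction algorithms: on any collection of isolated vertices, a consistent algorithm must color those bearing advice bit $0$ uniformly with one color and those bearing advice bit $1$ uniformly with another; otherwise, by introducing two later vertices that respectively dominate the two advice classes and are adjacent to each other (equipped with correct advice), the adversary could force at least three colors on an input whose optimal coloring uses two, contradicting consistency. This pins the algorithm's behavior on isolated components exactly to that of \algCBIP.

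With this rigidity in hand, I would then iterate the standard bipartite hard-instance construction: at each recursive step, take two disjoint hard sub-instances already forcing many colors and glue them via a small bridging gadget of isolated vertices whose advice bits consistency compels $A$ to color distinctly, after which connecting these vertices merges the components and forces the next color. Counting vertices yields a graph on $n$ vertices on which $A$ is driven to use at least $2\log n-O(1)$ colors, matching the upper bound of the first claim up to lower-order terms and showing that no consistent prediction algorithm uniformly beats \algCBIP.

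The main obstacle is the bridging gadget at each recursive step of the lower bound: one must ensure that the aggregated instance remains bipartite with a single globally consistent $2$-coloring (so that the advice remains definable and consistency is enforceable), while at the same time using the advice bits to pin $A$'s choices via the rigidity lemma. The tree lower bound of \cref{thm:lowerbound} benefited from the fact that attaching an apex to disjoint subtrees yields a tree; in the bipartite case, one must additionally track which shore the bridging vertices belong to in the final graph, and the component-aware behavior of \algCBIP-style algorithms may require a slightly larger gadget than in the tree case. Once this bookkeeping is handled, the induction closes and the matching lower bound follows.
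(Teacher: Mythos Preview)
Your treatment of the first claim is essentially what the paper does: the recurrence $N(\ell)\ge 2N(\ell-2)+2$ is reused from \cref{thm:bipcolors}, and the sharper base cases for \algCBIP (stemming from the fact that isolated vertices all receive color~$1$) yield $N(\ell)\ge 2^{\ell/2+1}-2$ in either parity, hence $\ell\le 2\log(n+2)-2\le 2\log n-1.999$ for $n\ge 5770$.

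For the second claim you are working far harder than necessary and leaving yourself an acknowledged gap. The statement only asserts that a consistent prediction algorithm cannot beat \algCBIP \emph{on every graph}; a single instance where it does strictly worse suffices. The paper simply reuses the opening observation of \cref{thm:lowerbound}: present two isolated vertices with opposite advice bits, which consistency forces to receive distinct colors, and then a third vertex adjacent to both, forcing a third color. Meanwhile \algCBIP uses only two colors on any three-vertex bipartite graph. This three-vertex tree already proves the claim. Your plan to construct a full $2\log n-O(1)$ lower bound via recursive bipartite gluing would, if completed, imply the proposition, but it is not needed here, and the bridging-gadget bookkeeping you flag as the ``main obstacle'' is precisely the nontrivial part that the paper avoids entirely.
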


\begin{proof}
For the deterministic algorithm $\textsc{CBip}$ we can see as in the 
proof of \cref{thm:bipcolors} that for graphs to be colored in 
$\ell$ colors, $2^{\ell/2+1}-2$ vertices are needed if $\ell$ is even 
and $3\cdot 2^{(\ell-1)/2}-2$ vertices are needed if $\ell$ is odd, 
so at least $2^{\ell/2+1}-2$ vertices are needed in either case. For 
a fixed number $n$ of vertices, this implies that $\textsc{CBip}$ 
requires at most $2\cdot \log(n+2)-2$ colors, 
which is at most $2\cdot\log n-1.999$ for $n\ge5770$ and $2\cdot\log 
n-2+\eps$ for any $\eps>0$ and $n$ large enough.
To show that a consistent prediction algorithm cannot beat \algCBIP, 
it suffices to show with the argument from the proof of 
\cref{thm:lowerbound} that any consistent algorithm can be forced 
to use three colors on an instance with three vertices, while  
$\textsc{CBip}$ only ever uses two colors on such instances, 
independent of the order in which the vertices are presented.
\end{proof}

\section{Algorithms with Predictions in the Random-Order Model}\label{sec:predrandom}

We now combine the two ways of diminishing the overly 
pessimistic assumptions of the classical online coloring model by 
considering an adversary that has no control over the order in which 
vertices arrive, and an algorithm that receives predictions of the 
type described in \cref{sec:pred}. 

Let $T$ be an arbitrary tree. We have a prediction function that assigns a
1-bit prediction $p(v)$ to every vertex $v$ of $T$. We fix consider a fixed $2$-coloring of $T$ and denote by $p^*(v)\in \{0,
1\}$ the parity of the color of $v$ in that coloring. We assume that $p(v)$
approximates $p^*(v)$ with $k$ errors.
We assume that vertices of $T$ arrive in random order together with their
predictions. The predictions thus do not depend on the random order of vertices.
We analyze the algorithm $\algPFF$, which colors a vertex $v$ with 
the smallest possible color that has the same parity as
$p(v)$. Let $c(v)$ be the color assigned to $v$ by $\algPFF$.

As in \cref{sec:random}, for each edge $(x,y)$ of $T$ we assign an orientation $x\rightarrow y$
if $x$ arrived earlier than $y$ and denote by $T_v$ the subtree of $T$ consisting of edges pointing towards $v$. 

\begin{lemma}\label{lmErrorsToColors}
Consider any $T_v$ such that $\algPFF$ colors $v$ with color $\ell$.
Then there exists a path in $T_v$ ending in $v$ that contains at least $\lfloor(\ell-1)/4\rfloor$ vertices with incorrect predictions.
\end{lemma}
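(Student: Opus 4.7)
The plan is to mirror the chain-extraction argument used in the proof of \cref{lem1}, but track an additional parity obstruction. The starting observation is that $\algPFF$ by definition only ever assigns to a vertex $u$ a color whose parity equals $p(u)$. Consequently, if $c(v)=\ell$, then for each same-parity color $c\in\{\ell-2,\ell-4,\dots\}$ above $2$ (respectively $1$), there must be an already-arrived neighbor of $v$ carrying that color; otherwise $\algPFF$ would have chosen it for $v$. Iterating this observation downwards, I would construct by reverse induction a sequence $v_1,v_2,\dots,v_m=v$ where each $v_j$ is an earlier-arrived neighbor of $v_{j+1}$ and $c(v_j)\in\{1,3,\dots,\ell\}$ or $\{2,4,\dots,\ell\}$ according to the parity of $\ell$. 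This gives a path in $T_v$ ending at $v$ of length $m=\lceil\ell/2\rceil$; the colors along the sequence are pairwise distinct, so the walk is genuinely a simple path rather than a closed walk.

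Next I would exploit the parity mismatch on this path. Because all the $c(v_j)$ share a common parity, so do the prediction bits $p(v_j)$: the construction of $\algPFF$ forces $p(v_j)$ to equal the parity of $c(v_j)$. On the other hand, $p^*$ is defined from a proper $2$-coloring of $T$, and a proper $2$-coloring alternates along any path of the tree. Therefore along $v_1,\dots,v_m$ the sequence $p^*(v_j)$ alternates while $p(v_j)$ is constant. Hence the two sequences disagree on at least every other index, so at least $\lfloor m/2\rfloor$ of the $v_j$ have incorrect predictions.

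It remains to verify $\lfloor m/2\rfloor=\lfloor\lceil\ell/2\rceil/2\rfloor\geq\lfloor(\ell-1)/4\rfloor$, which is a short case distinction on $\ell\bmod 4$ (and trivially holds for $\ell\in\{1,2,3,4\}$ where the right-hand side is $0$).

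I do not expect a serious obstacle: the only points that require care are guaranteeing the chain is a simple path in $T_v$ (addressed by the strict monotonicity of the colors), and handling the small-$\ell$ boundary where the chain reaches color $1$ or $2$. The loss from $\lceil\ell/2\rceil$ down to $\lfloor(\ell-1)/4\rfloor$ is in fact slack; the proof naturally yields an essentially one-in-two error density along the constructed path, and the stated bound is simply a clean floor expression compatible with the downstream argument.
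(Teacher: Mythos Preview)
Your argument is correct and rests on the same key observation as the paper's proof: any two adjacent vertices in the chain receive colors of the same parity, hence equal prediction bits $p$, while the true parities $p^*$ alternate along any path in $T$, forcing an error in every consecutive pair. The paper packages this as an induction stepping down four colors at a time (from $v$ with color $\ell$ to $v'$ with color $\ell-2$ to $v''$ with color $\ell-4$, collecting one error from the pair $(v,v')$ and invoking the hypothesis on $T_{v''}$), whereas you unroll the recursion, build the entire monotone chain of length $\lceil\ell/2\rceil$ in one shot, and then count errors globally via the constant-$p$ versus alternating-$p^*$ mismatch. Your formulation is slightly cleaner and makes the slack in the bound more visible, but the two proofs are essentially the same.
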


\begin{proof}
We prove this claim by induction on $\ell$. For $\ell\leq 4$, we have $\lfloor
(\ell-1)/4\rfloor=0$, so the path $(v)$ is sufficient. Now assume by induction that the claim holds for $1\leq \ell'<\ell$ and suppose a vertex $v$ is
colored with $c(v)=\ell\geq 5$. Then $v$ must be connected to a previously
revealed vertex $v'$ with $c(v')=\ell-2$. Since $v$ and $v'$ are connected and
are colored with the same parity, either $p(v)\neq p^*(v)$ or $p(v')\neq
p^*(v')$. The vertex $v'$ in turn must be connected to a vertex $v''\neq v$
with $c(v'')=\ell-4$ and $T_{v''}$ is a subtree of $T_v$. By induction, there
must be a path containing at least $\lfloor (\ell-5)/4\rfloor$ vertices with
incorrect predictions in $T_{v''}$ ending in $v''$. By extending this path with
the vertices $v'$ and $v$, we can construct a path in $T_v$ that contains at
least $\lfloor(\ell-5)/4\rfloor+1=\lfloor (\ell-1)/4\rfloor$ vertices with
incorrect predictions ending in $v$.
\end{proof}

\begin{lemma}\label{lem3}
For any color $\ell\geq 7$, if there are a total of $k$ vertices in $T$ whose predictions are incorrect, then
\[ \Prob{\exists v: c(v) \ge \ell} \le \frac{k^2}{\left\lfloor\frac{\ell-3}{4}\right\rfloor!}\;. \]
\end{lemma}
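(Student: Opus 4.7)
The plan is to adapt the combinatorial argument of \cref{lem1}, substituting the structural information from \cref{lmErrorsToColors} wherever the original proof relies on long oriented paths. The factor $n^2$ in \cref{lem1} arose from (implicitly) union-bounding over pairs of vertices (a potential path-start and a potential big-color vertex); here, I would replace both roles by vertices with incorrect predictions to obtain a factor $k^2$ instead of $n^2$.

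The first step is to invoke \cref{lmErrorsToColors}: whenever some $v^*$ satisfies $c(v^*)\ge\ell$, there is a path $P$ in $T_{v^*}$ ending at $v^*$ that contains at least $m:=\lfloor(\ell-1)/4\rfloor$ incorrect-prediction vertices. Let $w_1$ be the first and $w_2$ the last such vertex encountered along $P$. Since $P$ is oriented toward $v^*$, the sub-path $P_{w_1,w_2}$ of $P$ from $w_1$ to $w_2$ is still oriented from $w_1$ to $w_2$, and it still carries every incorrect-prediction vertex of $P$, in particular at least $m\ge\lfloor(\ell-3)/4\rfloor$ of them. This shows that $\{\exists v^*\colon c(v^*)\ge\ell\}$ is contained in the union, taken over ordered pairs $(w_1,w_2)$ of incorrect-prediction vertices, of the event $A_{w_1,w_2}$ that the unique $w_1$-to-$w_2$ path in $T$ is oriented from $w_1$ to $w_2$ and carries at least $\lfloor(\ell-3)/4\rfloor$ incorrect-prediction vertices.

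Next, I would estimate $\Prob{A_{w_1,w_2}}$ exactly as in \cref{lem1}: for any fixed pair, the probability that the $|P_{w_1,w_2}|$ vertices of that path happen to arrive in one specific order (out of $|P_{w_1,w_2}|!$ equally likely ones) is $1/|P_{w_1,w_2}|!$, and since $A_{w_1,w_2}$ forces $|P_{w_1,w_2}|\ge\lfloor(\ell-3)/4\rfloor$, this probability is at most $1/\lfloor(\ell-3)/4\rfloor!$. A union bound over the at most $k^2$ ordered pairs of incorrect-prediction vertices then yields exactly the bound claimed by the lemma.

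I expect the most delicate step to be the reduction to a pair $(w_1,w_2)$: one has to verify that even when $v^*$ itself has a correct prediction, so that $w_2$ lies strictly earlier on $P$ than $v^*$, the sub-path $P_{w_1,w_2}$ still inherits both the correct orientation and the entire set of incorrect-prediction vertices of $P$. The degenerate case $w_1=w_2$, which arises when the guaranteed path has only a single incorrect-prediction vertex, is absorbed by the trivial estimate $\Prob{A_{w,w}}\le 1$; this contributes at most $k$ to the union bound, which is swallowed by the $k^2/\lfloor(\ell-3)/4\rfloor!$ count.
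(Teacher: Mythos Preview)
Your argument is correct, but it reaches the bound by a different reduction than the paper. The paper proceeds in two stages: it first proves, for every fixed vertex $v$, the single-vertex bound $\Prob{c(v)\ge\ell}\le k/\lfloor(\ell-1)/4\rfloor!$ (via \cref{lmErrorsToColors} and a union bound over incorrect-prediction start-vertices $w_1$), and then uses a separate parity observation---that a vertex $v$ with $c(v)\ge\ell$ has a neighbour $v'$ of the same colour-parity and colour $\ge\ell-2$, so one of $v,v'$ must itself have a wrong prediction---to restrict the outer union bound to the $k$ incorrect-prediction vertices, picking up the second factor of $k$ at the cost of replacing $\ell$ by $\ell-2$. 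Your route bypasses this parity step entirely: you apply \cref{lmErrorsToColors} once to $v^*$ and then truncate the resulting path at the first and last incorrect-prediction vertices $w_1,w_2$, so that both endpoints of the ordered pair are automatically incorrect. This is cleaner and in fact yields the stronger bound $k^2/\lfloor(\ell-1)/4\rfloor!$ before you weaken it to match the stated $\lfloor(\ell-3)/4\rfloor$. One small remark: your closing comment that the diagonal terms are ``swallowed by the $k^2/\lfloor(\ell-3)/4\rfloor!$ count'' is unnecessary---the uniform estimate $\Prob{A_{w_1,w_2}}\le 1/\lfloor(\ell-3)/4\rfloor!$ already covers the case $w_1=w_2$, since either $A_{w,w}$ is empty (when $\lfloor(\ell-3)/4\rfloor\ge2$) or $\lfloor(\ell-3)/4\rfloor!=1$.
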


\begin{proof}
We first show that for any color $\ell\geq 5$ and for any vertex $v\in V$
\[\Prob{c(v) \ge \ell} \le \frac{k}{\left\lfloor\frac{\ell-1}{4}\right\rfloor!}\,.\]
To prove this, consider a fixed vertex $v$ such that $c(v)\ge \ell\geq 5$. 
Due to \cref{lmErrorsToColors}, there must be a path $(w_1,\dots,w_{r-1},v)$ in
$T_v$ containing at least $\lfloor (\ell-1)/4\rfloor$ vertices whose predictions
are incorrect. By shortening the path from the front, we can assume that
$p(w_1)\neq p^*(w_1)$.

Now consider any vertex $w_1$ whose path to $v$ in $T$ contains $r\geq \lfloor
(\ell-1)/4\rfloor$ vertices. Then $w_1$ is in $T_v$ if and only if the vertices
$w_1,\dots,w_{r-1},v$ arrived in order. Since every order of these vertices is
equally likely, we have
\[ \Prob{w\in T_v}=\frac{1}{r!}\le \frac{1}{\left\lfloor\frac{\ell-1}{4}\right\rfloor!}\;. \]
By a union bound over all vertices $w_1$ of $T$ whose predictions are incorrect, we can see that
\[\Prob{c(v) \ge \ell} \le \frac{k}{\left\lfloor\frac{\ell-1}{4}\right\rfloor!}\;.\]
To prove the lemma, note that if there is a vertex $v\in T$ with $c(v)\ge
\ell\geq 7$, then $v$ is connected to a vertex $v'$ with $c(v')=\ell-2\geq 5$.
Since $v$ and $v'$ were colored with the same parity, one of them must have
received an incorrect prediction. In either case there is a vertex whose
prediction was incorrect and that was colored with a color at least $\ell-2$.  

Applying a union bound over all vertices $v'\in T$ with incorrect predictions, we get that 
\[ \Prob{\exists v: c(v) \ge \ell} \le \Prob{\exists v': c(v')\ge \ell-2, p(v')\neq p^*(v')} \le \frac{k^2}{\left\lfloor\frac{\ell-3}{4}\right\rfloor!}\;.\]
\end{proof}

\begin{theorem}
Let $c \ge \frac{2\euler}{\euler-\log \euler} \approx 4.262$.
The expected number of colors used by the algorithm $\algPFF$  on 
trees is at most
\[ 4c\cdot \frac{\log k}{\log \log k} + 71\;. \]
\end{theorem}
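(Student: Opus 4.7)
The plan is to mirror the proof of \cref{thm:det_up}, using \cref{lem3} (with $k$ playing the role of $n$) in place of \cref{lem1} and adjusting for the extra factor of $4$ that appears in the denominator of the factorial.

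Let $X$ denote the number of colors used by $\algPFF$ and write $\E{X}=\sum_{\ell\geq 1}\ell\cdot\Prob{X=\ell}$. I will split the sum at a threshold $s:=4s'+6$, where $s':=\lceil c\cdot\log k/\log\log k\rceil$. The lower part contributes at most $s$ trivially. For the upper part, I use $\Prob{X=\ell}\leq\Prob{\exists v: c(v)\geq \ell}$; since $\ell>s$ implies $\ell\geq 7$, \cref{lem3} applies and gives a per-term bound of $k^2/\lfloor(\ell-3)/4\rfloor!$.

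The key technical step is to reindex the tail via $m:=\lfloor(\ell-3)/4\rfloor$: for each value of $m$, precisely four consecutive values of $\ell$ are mapped to it, namely $\{4m+3,\dots,4m+6\}$, summing to $16m+18$; the offset $+6$ in $s$ ensures that the tail starts at $m\geq s'+1$. The tail therefore equals
\[k^2\sum_{m\geq s'+1}\frac{16m+18}{m!}=k^2\left(16\sum_{m\geq s'+1}\frac{1}{(m-1)!}+18\sum_{m\geq s'+1}\frac{1}{m!}\right)\;,\]
where I used $m/m!=1/(m-1)!$. Applying \cref{claimfactorial} to each sum (the first after shifting the index by one) bounds the entire expression by an absolute constant times $k^2/s'!$.

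The hypothesis $c\geq 2\euler/(\euler-\log\euler)$ is precisely calibrated so that \cref{lem2}, applied with $k$ in place of $n$, guarantees $s'!\geq k^2$, collapsing the tail to an absolute constant. Combining this with $s\leq 4c\log k/\log\log k+10$ (from the ceiling in $s'$ and the offset $+6$) yields the desired bound, the additive $71$ being a generous bound on the sum of the various constants. The chief obstacle I anticipate is precisely this numerical bookkeeping: tracking the slack introduced by the ceiling in $s'$, the offset $s=4s'+6$, the grouping of four $\ell$-values per $m$, and the geometric-sum bound of \cref{claimfactorial}, so that everything indeed fits within $71$. A minor additional step is to handle the regime of very small $k$, where the hypothesis of \cref{lem2} is vacuous; this can be dispatched by a direct constant-size argument and absorbed into the additive term.
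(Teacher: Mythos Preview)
Your proposal is correct and follows essentially the same route as the paper: split at a threshold, apply \cref{lem3}, reindex four-to-one via $m=\lfloor(\ell-3)/4\rfloor$, apply \cref{claimfactorial}, and finish with \cref{lem2}; the only cosmetic difference is your parametrization $s=4s'+6$ versus the paper's $s=\lceil 4c\log k/\log\log k+10\rceil$. Regarding the bookkeeping you flagged, the tail is at most $32k^2/s'!+36k^2/(s'+1)!\le 50k^2/s'!\le 50$ (using $(s'+1)!\ge 2\cdot s'!$ for $s'\ge 1$), which together with $s\le 4c\log k/\log\log k+10$ yields a total of at most $60$, comfortably within the stated $71$.
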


\begin{proof}
Let $X$ be the number of colors used and let
\[ s:= \left\lceil\frac{4c\cdot \log k}{\log\log k} + 10\right\rceil\;. \]
The expected number of colors used is
\begin{align*}
\E{X} &= \sum_{\ell\ge 1} \ell\cdot\Prob{X=\ell} \\
      &= \sum_{1\le \ell \le s}\ell \cdot\Prob{X=\ell} + \sum_{l > s}l \cdot\Prob{X=\ell} \\
      &\le s + \sum_{\ell > s}\ell \cdot\Prob{X=\ell}\;.
\end{align*}

Note that $X=\ell$ implies that $\exists v: c(v)\ge \ell$. Thus $\Prob{X=\ell} \le \Prob{\exists v: c(v)\ge \ell}$,
which is bounded by \cref{lem3}, so
\begin{align*}
\E{X} &\le s + \sum_{\ell > s} \ell\cdot \frac{k^2}{\left\lfloor\frac{\ell-3}{4}\right\rfloor!} \\
      &=   s + k^2 \cdot \sum_{\ell \ge s+1} \frac{\ell}{\left\lfloor\frac{\ell-3}{4}\right\rfloor!} \\
      &\le   s + k^2 \cdot \sum_{i \ge \left\lfloor \frac{s-2}{4}\right\rfloor} \frac{16i + 14}{i!} \\
      &=   s + 16k^2 \cdot \sum_{i \ge \left\lfloor \frac{s-2}{4}\right\rfloor - 1} \frac{1}{i!} +
               14k^2 \cdot \sum_{i \ge \left\lfloor \frac{s-2}{4}\right\rfloor} \frac{1}{i!} \\
      &\le s + 30k^2 \cdot \sum_{i \ge \left\lfloor \frac{s-2}{4}\right\rfloor -1 } \frac{1}{i!}
\end{align*}
By \cref{claimfactorial} we have
\[ \E{X} \le s + 60k^2 \cdot \frac{1}{\left\lfloor \frac{s-6}{4}\right\rfloor!}\;, \]
and since
\[ \left\lfloor \frac{s-6}{4}\right\rfloor \ge \frac{c\log k}{\log \log k} \]
by our choice of $s$, we can invoke \cref{lem2} to obtain
\[ \left\lfloor \frac{s-6}{4}\right\rfloor! \ge k^{c\cdot\left(1-\frac{\log \euler}{\euler}\right)}\;. \]
Thus,
\[ \E{X} \le s + 60k^{2-c\cdot\left(1-\frac{\log \euler}{\euler}\right)} \]
and by our choice of $c$ we have
\[ \E{X} \le s + 60\le 4c\cdot\frac{\log k}{\log\log k} + 71\;. \]
\end{proof}

\section{Conclusion}

We have analyzed the behavior of the \algFF algorithm on trees in the 
random-order model and showed that its competitive ratio improves from
$\Theta(\log n)$ to $\mathcal{O}(\log n/(\log \log n))$. 
We further looked into algorithms with predictions for the colors of 
the vertices and provided consistent and smooth algorithms with a 
robustness that matches the best algorithms without predictions on 
trees and bipartite graphs, respectively.
Lastly, we combined the random-order model with predictions and gave 
a consistent and smooth algorithm with a robustness of 
$\mathcal{O}(\log n/(\log \log n))$, which matches our algorithm 
without predictions.

The performance of algorithms such as \algFF and \algCBIP on 
bipartite graphs in the random-order model remains as an open 
problem. The most interesting question, however, is whether we can 
prove a matching lower bound for the improved competitive ratio in the 
random-order model.

\bibliography{refs}

\appendix

\end{document}